\documentclass[preprint]{elsarticle}
\usepackage[utf8]{inputenc}
\usepackage[american]{babel}
\usepackage{amsfonts, amsmath, amssymb}
\usepackage{mathtools}
\usepackage{fancyhdr}
\usepackage{amsthm}
\usepackage{graphicx}
\usepackage{changepage}
\usepackage{algorithm}
\usepackage{algpseudocode}
\usepackage{tikz}
\usepackage{subfig}
\usepackage{doi}

\newtheorem{thm}{Theorem}  \newdefinition{rmk}{Remark} \newproof{pf}{Proof}
\newproof{pot}{Proof of Theorem \ref{thm2}}

\title{Determining Implication of Fixed Matrix Prenex Normal Forms Can Be Decided in Linear Time}
 \author[1]{Adam Wang\corref{cor1}}
\cortext[cor1]{Corresponding author}
 \affiliation[1]{
 organization={Decamicron Labs},
city={Baltimore},
postcode={21218},
state={Maryland},
country={USA}}

\begin{document}

\begin{abstract}
    For a fixed arbitrary matrix depending on $n$ variables, one may ask whether a Prenex Normal Form (PNF) implies another. A RAM algorithm running in linear time is presented and shown to be asymptotically optimal.
\end{abstract}
\begin{keyword}
Implication \sep Quantifier Logic \sep Prenex Normal Form \sep Linear Time
\end{keyword}

\maketitle

\section{Introduction}
 Any given logical sentence may be brought into Prenex Normal Form (PNF) \cite{hinman_fundamentals_2018}. This consists of a string of $n$ quantified variables, known as the prefix, as well as a string of propositions, known as the matrix. 

 If one restricts the variables to be boolean, one recovers the Quantified Boolean Formulas (QBF). The simpler case where all the quantifiers are existential is the famous Boolean satisfiability problem (SAT), which is known to be NP-Complete \cite{karp_reducibility_1972}. The general case for all quantifiers is PSPACE-Complete, and so is the restriction to the bounded-width case \cite{atserias_bounded-width_2014}.

 In the general setting, various results have focused upon fixed forms of quantifiers for a function-symbol-free matrix. For example, the Ackermann, Gödel, and Bernays–Schönfinkel classes have been studied. It is known that the latter is NEXPTIME-complete \cite{lewis_complexity_1980}. 

 Despite the difficulty of these problems, certain restrictions lead to much simpler problems. For example, if one restricts SAT such that sentences are conjunctions of clauses, where each clause is a disjunction of two variables, then a famous result tells us this may be solved in linear time \cite{aspvall_linear-time_1979}\cite{krom_decision_1967}. 

 If one restricts to implications, another well-studied class of problems involves the Horn clauses, which are disjunctions with at most one non-negated term. One can check these clauses always correspond to an implication statement. The boolean satisfiability of Horn clauses (HORNSAT) is known to be P-Complete \cite{cook_logical_2010}. In fact, it is linear \cite{dowling_linear-time_1984}, but becomes polynomial in the multi-valued logic case \cite{beckert_transformations_1999}\cite{hahnle_tutorial_2001}. A quantified boolean case has also been considered and shown to run in linear time \cite{buning_resolution_1995}. 

The present paper considers a different direction in the vein of quantified implications. Instead of boolean logic, this problem falls under the general setting. For two sentences $S_1$ and $S_2$ in PNF with the same arbitrary matrix, does $S_1$ imply $S_2$? This problem will be termed PNF Prefix Implication (PNFPI). More explicitly, for $n$ variables $\textbf{x}$ that each have non-empty range and an arbitrary PNF matrix $P(\textbf{x})$, if $Q_1\textbf{x}$ and $Q_2\textbf{x}$ are valid PNF prefixes, then when do we always have that:
\begin{align*}
    Q_1\textbf{x} P(\textbf{x}) \implies Q_2\textbf{x} P(\textbf{x}) 
\end{align*}
Or as a quantified Horn clause:
\begin{align*}
    (\neg Q_1)\textbf{x}_1Q_2\textbf{x}_2 (\neg P(\textbf{x}_1)\vee P(\textbf{x}_2))
\end{align*}

The main result of this paper will be to show that there exists a Random Access Machine (RAM) algorithm linear in prefix length that solves PNFPI. 

\section{The Problem}
Begin by noting that since the matrix is arbitrary and the same, only the prefix is salient. Therefore, PNFPI takes in two pieces of information: firstly, an ordered list $\sigma$ of numbers from $0$ to $n-1$, and secondly, a Boolean string $B$ of length $n$. The first list specifies the order of the variables and the latter the quantifiers at each position. 

Henceforth, in the context of the Boolean string $B$, let $0$ represent $\exists$ and $1$ represent $\forall$. A \emph{$\forall$-run} is the locally longest sub-string of contiguous variables with the universal quantifier. A \emph{$\forall$-element} is an element of a $\forall$-run. Similarly define a \emph{$\exists$-run} and a \emph{$\exists$-element}. Elements are $0$-indexed from the left.

Note that two prefixes are logically equivalent up to partial permutation within each run of quantifiers of the same type. For example, the following two prefixes are equal since one permutes $x_1$ and $x_2$ within the same $\forall$-run:
$$\forall x_1 \forall x_2 \exists x_3 \forall x_4 \quad\quad\forall x_2 \forall x_1 \exists x_3 \forall x_4$$
However:
$$\forall x_1 \forall x_4 \exists x_3 \forall x_2 $$
is not equal to the former two sentences for an arbitrary matrix.

Next, note that it was assumed that each variable has a non-empty range. Thus, if all elements of a range satisfy a matrix, then there must exist at least one element of that range that satisfies it. Further, if there exists an element in range A such that for all elements of range B a matrix holds, then for all elements of range B there exists an element in range A where the matrix holds. However, in both cases, the reverse direction is not always true for an arbitrary matrix. These are also the only two implications that apply universally. If there is an implication between two PNFs of the same matrix, then there must be a sequence of the above two moves from the first equivalence class to the latter.

Since equivalence classes are hard to deal with computationally, representatives are considered instead. Given any representative of an equivalence class of logical sentences, the previous considerations lead to three possible moves:
\begin{enumerate}
    \item $\forall x_1\forall x_2\implies\forall x_2\forall x_1$ or $\exists x_1\exists x_2 \implies \exists x_2 \exists x_1$
    \item $\forall x \implies\exists x$
    \item $\exists x_1 \forall x_2\implies \forall x_2 \exists x_1$
\end{enumerate}

We may recast this problem in graph-theoretic language to get a sense of the size of the search space. One may form a directed graph of all equivalence classes in $n$ variables and their implications. If $e(n)$ is the number of edges in this graph, then a naive tree search has a time complexity of at least \cite{oeis_foundation_inc_entry_2025}:
$$\mathcal O \left(\frac{n!}{\log(2)^{n+1}}+e(n)\right)$$
The directed graph of equivalence class representatives grows even faster. However, it is still finite, which proves that PNFPI is always solvable for a finite input.

\section{The Algorithm}
\begin{algorithm}
\caption{RAM Linear time algorithm for PNFPI}\label{alg:cap}
\begin{algorithmic}
\Require $S_1 = \{\sigma_1, B_1\}, S_2= \{\sigma_2, B_2\}, n=|\sigma_1|=|\sigma_2|=|B_1|=|B_2|$ \Comment{Two prefixes of same length}
\Ensure \textbf{accept} if $(S_1\implies S_2)$, \textbf{reject} otherwise 

\State $P \gets$ hash-map of elements to position in $\sigma_1$
\State $F \gets$ index of largest indexed $\exists$-element in $S_1$
\State $V \gets$ length $n$ list of $0$s

\State $i \gets n-1$
\While{$i \geq 0$}
    \If{$B_1[P[\sigma_2[i]]]$ == $\exists$ and $B_2[i]$ == $\forall$}
        \State \Return \textbf{reject} \Comment{Case 5}
    \ElsIf{$B_1[P[\sigma_2[i]]]$ == $\forall$ and $B_2[i]$ == $\forall$}
        \If{$F > P[\sigma_2[i]]$}
            \State \Return \textbf{reject} \Comment{Case 4}
        \EndIf
    \EndIf
    \State $V[P[\sigma_2[i]]]\gets 1$ \Comment{Record $\sigma_2[i]$ has been verified}
    \If{$P[\sigma_2[i]]==F$}
        \State $F \gets$ index of largest indexed unverified $\exists$-element in $S_1$, computed using $V$
    \EndIf
    \State $i\gets i-1$
\EndWhile
\State \Return \textbf{accept}
\end{algorithmic}
\end{algorithm}
\begin{thm}
    Algorithm 1 correctly solves PNFPI
\end{thm}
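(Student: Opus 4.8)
The plan is to split the correctness argument into two independent pieces: first, a static, move-free characterization of the relation $S_1 \implies S_2$; second, a line-by-line verification that Algorithm 1 decides exactly that characterization. By the analysis of Section 2, $S_1 \implies S_2$ holds for every matrix if and only if the equivalence class of $S_2$ is reachable from that of $S_1$ by a finite sequence of moves (1)--(3), and I will take this reduction as the starting point, turning ``reachable by (1)--(3)'' into a condition on the two prefixes alone.

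The characterization I will aim to prove is: $S_1 \implies S_2$ iff (a) every variable that is $\exists$ in $S_1$ is $\exists$ in $S_2$, and (b) for every variable $v$ that is $\forall$ in both $S_1$ and $S_2$, every variable that is $\exists$ in $S_1$ and occurs to the right of $v$ in $S_1$ also occurs to the right of $v$ in $S_2$. Necessity is short: no move ever turns an $\exists$ into a $\forall$, which gives (a); and a variable that is $\forall$ in $S_2$ is $\forall$ at every stage, while the only move that could transpose a $\forall$ with an $\exists$ immediately to its right is the (unavailable) reverse of move (3), so such an $\exists$ can never cross to the left of it, which gives (b). For sufficiency I will induct on $n$ by peeling off the rightmost variable $v$ of $S_2$: if $v$ is $\forall$ in $S_2$, condition (b) forces every $\exists$-element of $S_1$ to lie to the left of $v$ in $S_1$, so $v$ already sits in the final $\forall$-run of $S_1$ and one within-run reshuffle by move (1) pins it at the right end; if $v$ is $\exists$ in $S_2$, I apply move (2) when $v$ was universal and then slide $v$ rightward past every $\forall$-block by repeated move (3), again pinning it at the right. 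One then checks that truncating both prefixes to their first $n-1$ positions preserves (a) and (b), so the induction hypothesis completes the transformation of the prefix below $v$ without ever disturbing the pinned variable.

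For the second piece I will show that Algorithm 1 rejects precisely when (a) or (b) fails. As the scan runs $i$ from $n-1$ downward, the array $V$ marks exactly the variables already visited, which are exactly those occurring to the right of $\sigma_2[i]$ in $S_2$. The test ``$B_1[P[\sigma_2[i]]] = \exists$ and $B_2[i] = \forall$'' is true exactly on a witness that (a) fails, so Case 5 is reached iff (a) is violated. The pointer $F$ is maintained so as to always equal the largest $S_1$-index of a so-far-unvisited $\exists$-element of $S_1$ --- equivalently, of a $\exists$-element of $S_1$ not yet known to lie to the right of the current variable in $S_2$ --- taken to be $-\infty$ once none remain; since $F$ is non-increasing over the run, recomputing it only when the just-marked index equals $F$ is enough to preserve this invariant. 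Consequently, when $\sigma_2[i]$ is universal in both prefixes, the comparison ``$F > P[\sigma_2[i]]$'' holds exactly when some $\exists$-element of $S_1$ lies to the right of $\sigma_2[i]$ in $S_1$ but to its left in $S_2$, i.e. exactly a violation of (b). Combining the two pieces, the algorithm accepts iff (a) and (b) both hold iff $S_1 \implies S_2$.

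I expect the sufficiency direction of the characterization to be the main obstacle, because the two dependency-changing moves interact: a variable that is $\forall$ in $S_1$ but $\exists$ in $S_2$ may need to end up to the left of variables that stay universal throughout, which is legal only after first evacuating the $\exists$-variables lying between them and merging the resulting $\forall$-runs. The peel-the-rightmost induction is chosen precisely to avoid reasoning about any global scheduling of such operations, but it forces one to verify carefully that the truncation step really does preserve both (a) and (b); the analogous delicate point on the algorithmic side is the claim that the lazy recomputation of $F$ maintains its stated invariant.
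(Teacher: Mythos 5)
Your proposal is correct, and it rests on the same two mathematical pillars as the paper's proof: for sufficiency, a right-to-left peeling construction that pins the last variable of $S_2$ and recurses (this is exactly the paper's chain $I_0=S_1\implies I_1\implies\dots\implies I_n=S_2$, with your two cases corresponding to the paper's Cases 1--3); for necessity, the two monotone invariants of the move system (an $\exists$-element never becomes a $\forall$-element, and an $\exists$-element never crosses to the left of a variable that remains universal), which are precisely the paper's Case 5 and Case 4 contradictions. Where you genuinely differ is in the decomposition: you factor out an explicit, algorithm-free characterization --- conditions (a) and (b) on the pair of prefixes --- prove that it is equivalent to implication, and only then verify that the scan with the lazily updated pointer $F$ decides exactly that predicate. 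The paper never states such a characterization; it interleaves the logic with the execution, defining its five cases in terms of what the loop inspects and arguing correctness of the $F$ bookkeeping inline. Your organization buys a reusable statement of the implication criterion and isolates the one genuinely delicate algorithmic point (that recomputing $F$ only when the just-verified index equals $F$ preserves the ``largest unverified $\exists$-index'' invariant) as a self-contained claim, at the cost of having to check that truncation preserves (a) and (b) --- a step the paper avoids because its chain construction carries the ``agrees in the last $k$ positions'' invariant directly. Both arguments take as given the Section~2 premise that implication coincides with reachability under the three moves, so you have not introduced any gap the paper does not already have.
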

\begin{proof}
    Since the algorithm always halts and returns for a finite input, it suffices to prove that the algorithm accepts if and only if $S_1\implies S_2$.

    \emph{Forward direction.} Assume the algorithm returns $\textbf{accept}$. It will be claimed that this implies we can construct a chain of implications between $S_1$ and $S_2$. 

    Firstly, consider the following reduced problem. Given two prefixes $I_k$ and $S_2$ that agree in the last $k$ elements, the goal is to construct $I_{k+1}$ such that $I_k \implies I_{k+1}$, and that $I_{k+1}$ agrees with $S_2$ in the last $k+1$ elements. If this is always possible, then we can construct the aforementioned chain. 

    Note that the $(n-k-1)$-th position is the largest indexed position of $I_k$ that is not guaranteed to agree with $S_2$. If $x_k$ is the $(n-k-1)$-th element in $S_2$, then we attempt to move $x_k$ in $I_k$ to the $(n-k-1)$-th position while matching the quantifiers.

    There are a total of $5$ cases:
    \begin{itemize}
    \item \textbf{Case 1} ($x_k$ is an $\exists$-element in both $I_k$ and $S_2$) From the first and third allowed moves, one can always move an $\exists$-element to any larger indexed position. Since $I_k$ and $S_2$ agree for the last $k$ indices and variables are unique, then the index of $x_k$ must be strictly less than $n-k$.
    
    \item \textbf{Case 2} ($x_k$ is an $\forall$-element in $I_k$ and $\exists$-element in $S_2$) We may simply note that we can apply the second move to switch quantifiers and this reduces to Case 1. 

    \item \textbf{Case 3} ($x_k$ is an $\forall$-element in both $I_k$ and $S_2$, and that $x_k$ is in the last run) If there are no runs behind the run $x_k$ is in, then any elements with larger index than $x_k$ in $S_1$ are $\forall$-elements. Therefore we can move $x_k$ to the correct position using the first move. 
        
    \item \textbf{Case 4} ($x_k$ is an $\forall$-element in both $I_k$ and $S_2$, and that $x_k$ is not in the last run) If $x_k$ is not in the last run, there must be an $\exists$-run after. There is no clear way to construct $I_{k+1}$.
    
    \item \textbf{Case 5} ($x_k$ is an $\exists$-element in $I_k$ and $\forall$-element in $S_2$) There is no clear way to construct $I_{k+1}$.
    \end{itemize}

    Thus, if one encounters Cases 1, 2, and 3, then it is possible to construct $I_{k+1}$. If one encounters Case 4 and 5, then it is not clear if we can. It will later be shown that this is indeed impossible, but this is not needed for the forward direction. 

    Algorithm 1 checks whether it is possible to build each $I_k$ interpolating between $S_1$ and $S_2$. It goes through each element from the back of $S_2$, checks for Case 4 and 5, then virtually builds each $I_k$. To do so, it suffices to keep track of the index $F$ of the largest indexed unverified $\exists$-element in $S_1$. 
    
    It first checks Case 5, which is done by straightforward matching of quantifiers. 
    
    Next, it checks whether we have Case 3 or 4. This amounts to first determining whether the quantifiers are both universal, and then comparing the index of $x_k$ to $F$. If $F$ is larger, then there is an $\exists$-element behind $x_k$. As $x_k$ is an $\forall$-element, it cannot be in the last run, which is Case 4. If not, then the last $\exists$-run is before $x_k$. Thus, any elements with larger index are $\forall$-elements, so we have Case 3. If it encounters Case 4 or 5, it returns $\textbf{reject}$. Else, it is possible to construct $I_{k+1}$, so it updates $F$, thereby regenerating the necessary initial conditions for the next iteration and continues. 

    Therefore, if the algorithm returns $\textbf{accept}$, then only Case 1, 2, and 3 were encountered so it is possible to construct a chain of implications from $S_1$ to $S_2$. 

    \emph{Backward direction.} Assume that $S_1\implies S_2$. We proceed by contradiction. As the input is finite, the algorithm must return, so it suffices to assume that the algorithm returns $\textbf{reject}$. This implies that either Case 4 or Case 5 was encountered at some point. 

    \begin{itemize}
    \item \textbf{Case 4} ($x_k$ is an $\forall$-element in both $I_k$ and $S_2$, and that $x_k$ is not in the last run) For some $I_k$, there is a $\exists$-run behind a $\forall$-run. Let $x_1$ denote the $\forall$-element and $x_2$ denote the $\exists$-element. In $S_1$, the index of $x_1$ must be smaller than the index of $x_2$ since the construction of each $I_k$ keeps the relative order of the unverified elements unchanged. Meanwhile in $S_2$ the index of $x_1$ is larger than the index of $x_2$, because both must be unverified based on their positions in $I_k$, and $x_1$ must have the largest index of all unverified elements in $S_2$. If $S_1\implies S_2$ then there must be a sequence of implications using the three allowed moves that switches the index of $x_1$ to be larger than $x_2$. However, it is impossible to move a $\forall$-element behind a $\exists$-element while retaining the same quantifier type.
    
    \item \textbf{Case 5} ($x_k$ is an $\exists$-element in $I_k$ and $\forall$-element in $S_2$) There must be a variable $x_k$ that is a $\exists$-element in $S_1$ and $\forall$-element in $S_2$. However, it is impossible to switch a $\exists$-element to an $\forall$-element. 
    \end{itemize}
    If either case was encountered, $S_1$ cannot imply $S_2$.
\end{proof}
\begin{thm}
    Algorithm 1 runs in RAM $\mathcal O(n)$ time 
\end{thm}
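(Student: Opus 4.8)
The plan is to split the running time into a preprocessing phase and the main \textbf{while} loop, to observe that every step except one is constant time, and to absorb the remaining step --- the recomputation of $F$ --- into a single amortized $\mathcal{O}(n)$ bound. Throughout, I work in the standard word-RAM model, where a machine word holds $\Theta(\log n)$ bits, so that an index into a length-$n$ array fits in a word and array access, comparison, and index arithmetic each cost $\mathcal{O}(1)$.

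First I would check the preprocessing. Because the variable labels are exactly the integers $0,\dots,n-1$, the map $P$ may be stored as an ordinary array and filled by one left-to-right pass over $\sigma_1$ (a general hash map would also do, with $\mathcal{O}(1)$ expected cost per operation); the initial value of $F$ is obtained by one backward scan of $B_1$; and allocating and zeroing $V$ is one pass. All of this is $\mathcal{O}(n)$. The loop itself runs for exactly $n$ iterations ($i=n-1$ down to $0$), and every line of the body other than the reassignment of $F$ is a fixed number of array/hash lookups, comparisons, and assignments, hence $\mathcal{O}(1)$; these contribute $\mathcal{O}(n)$ in aggregate.

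The only delicate point --- and the step I expect to be the main obstacle --- is the line ``$F \gets$ index of largest indexed unverified $\exists$-element in $S_1$''. I would implement it with a cursor that starts at the current value of $F$ and moves downward, skipping $\forall$-positions and $\exists$-positions already marked in $V$, until it reaches an unverified $\exists$-position (or drops below $0$, in which case $F$ is set to $-1$ and the later guard $F > P[\sigma_2[i]]$ becomes vacuous). The claim to establish is that $F$ is non-increasing over the whole execution and that the cursor sweeps, in total, $\mathcal{O}(n)$ positions. For monotonicity: the value of $F$ is altered only on the iteration where $P[\sigma_2[i]]=F$, i.e. exactly when position $F$ is first marked verified; meanwhile, at the instant $F$ was last assigned it was by definition the largest \emph{unverified} $\exists$-position, so every $\exists$-position above it was already verified and (since each position is verified at most once, variables being distinct) stays verified, while marking $\forall$-positions or lower $\exists$-positions cannot change which is the largest unverified $\exists$-position. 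Hence $F$ is constant between consecutive reassignments, and each reassignment replaces it by a strictly smaller index. Writing $f_1 > f_2 > \cdots$ for the successive values, the cursor during the reassignment that leaves $F=f_{t+1}$ visits only positions in the range $[\,f_{t+1},\,f_t\,]$; these ranges are decreasing and overlap in at most one endpoint each, so their total length is $\mathcal{O}(f_1 + (\text{number of reassignments})) = \mathcal{O}(n)$. Thus all recomputations of $F$ together cost $\mathcal{O}(n)$.

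Summing the three contributions --- $\mathcal{O}(n)$ preprocessing, $\mathcal{O}(n)$ for the constant-time portion of the $n$ loop iterations, and $\mathcal{O}(n)$ amortized over all reassignments of $F$ --- yields total running time $\mathcal{O}(n)$ on a RAM. I would close with the remark that this is best possible up to constant factors, since any correct algorithm must at least read its input, which has size $\Theta(n)$.
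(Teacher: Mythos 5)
Your proof is correct and follows essentially the same decomposition as the paper: linear-time preprocessing of $P$, $F$, and $V$; constant time per loop iteration apart from the recomputation of $F$; and an amortized $\mathcal{O}(n)$ bound for all recomputations obtained by resuming the downward scan from the current value of $F$. You spell out the monotonicity of $F$ and the edge case where no unverified $\exists$-element remains more explicitly than the paper does, but the underlying argument is the same.
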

\begin{proof}
    Firstly, the hashmap of elements $P$ may be constructed in linear time by initializing it as an empty list of length $n$, then iterating over $i$ and setting the $\sigma_1[i]$-th element of $P$ to $i$. 

    Secondly, the position of the final $\exists$-element in $S_1$ may be found in linear time by iterating from the back of $B_1$. 

    Thirdly, the \texttt{while} loop will have at most $n$ stages so it remains to show that each stage runs in constant time with an overhead that adds up to at most linear. Everything but the final \texttt{if} block will run in constant time, because the only operations are accessing arrays and performing logical inference. 
    
    One may achieve linear overhead for the final \texttt{if} block by beginning at the $F$-th element of $V$ and $B_1$, and iterating downwards until the first occurrence of an unverified $\exists$-element. Since we begin at $F$ each time, over the course of the entire \texttt{while} loop, this process can only iterate at most once through the two lists, so the total overhead is at most linear.
\end{proof}

\begin{thm}
    Any RAM algorithm that solves PNFPI is at least linear time
\end{thm}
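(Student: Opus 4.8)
\section*{Proof proposal for Theorem 3}

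The plan is to prove the matching lower bound by an adversary (fooling-set) argument: exhibit a single ``hard'' base instance whose answer is \textbf{accept}, together with $\Omega(n)$ sibling instances, each obtained by perturbing the base in essentially one input location and each having answer \textbf{reject}, so that any correct algorithm is forced to probe all of those locations and hence to run for $\Omega(n)$ steps. Throughout I will work in the RAM model, using only that a single step reads $O(1)$ memory cells, so that inspecting $k$ distinct input cells requires $\Omega(k)$ steps.

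First I would fix the base instance $S_1=S_2$ given by $\sigma_1=\sigma_2=(0,1,\dots,n-1)$ and $B_1=B_2=1^n$ (every quantifier universal). Since the two prefixes are literally identical we have $S_1\implies S_2$, so a correct algorithm must accept. Next, for each $i\in\{0,\dots,n-1\}$ let $I_i$ be the instance obtained from the base by flipping $B_1[i]$ to $\exists$, so that $S_1^{(i)}=\forall x_0\cdots\forall x_{i-1}\,\exists x_i\,\forall x_{i+1}\cdots\forall x_{n-1}$ while $S_2^{(i)}$ is unchanged. I would then verify that $I_i$ has answer \textbf{reject}: taking every variable's range to be $\{0,1\}$ and the matrix $P(\mathbf{x})=\neg x_i$, the sentence $S_1^{(i)}$ is true (at the step $\exists x_i$ choose $x_i=0$) while $S_2^{(i)}$ is false (take $x_i=1$), so the implication fails for this particular matrix. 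This step uses the semantics recalled in Section~2, namely that the answer to PNFPI is ``the implication holds for \emph{every} matrix and range assignment'', so one counterexample matrix is enough to force \textbf{reject}.

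The second step is the standard fooling argument. Let $A$ be any correct deterministic RAM algorithm, run it on the base instance, and let $R$ be the set of input cells it reads before accepting. If $R$ omitted the cell holding $B_1[i]$ for some $i$, then since the base instance and $I_i$ agree on every cell except that one, an induction on the number of executed steps shows that $A$ performs an identical computation on $I_i$ (the two runs can diverge only upon reading a differing cell, but $A$ never reads that cell on the base run, hence never on the $I_i$ run either). Thus $A$ would also accept $I_i$, contradicting that $I_i$'s answer is \textbf{reject}. Hence $R$ contains all $n$ cells storing $B_1$, so $A$ performs $\Omega(n)$ steps on the base instance, and its worst-case running time is $\Omega(n)$. (The same conclusion holds for bounded-error randomized algorithms by an averaging/Yao-type argument, since every sibling is forced.)

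The one genuinely delicate point I expect is the interaction with the word-RAM cost model: if the Boolean string $B_1$ is allowed to be packed $\Theta(\log n)$ bits per word, the argument above only yields $\Omega(n/\log n)$. To obtain a model-robust $\Omega(n)$ I would instead perturb the permutation $\sigma_2$, whose $n$ entries each carry $\log n$ bits and therefore occupy $\Omega(n)$ words in any representation: take the base with $B_1=B_2$ alternating $\forall\exists\forall\exists\cdots$ and $\sigma_1=\sigma_2=\mathrm{id}$ (answer \textbf{accept}, the prefixes being identical), and for each even $j$ form the sibling $J_j$ by swapping entries $j$ and $j+1$ of $\sigma_2$, which turns the local pattern $\forall x_j\,\exists x_{j+1}$ into $\forall x_{j+1}\,\exists x_j$. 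A computation like the one above (ranges $\{0,1\}$, matrix $P(\mathbf{x})=\neg x_{j+1}$) shows the first is satisfiable while the second is not, so $J_j$ flips the answer to \textbf{reject}. Since the perturbed pairs $\{j,j+1\}$ over even $j$ are pairwise disjoint and number $\Omega(n)$, the fooling argument now forces $A$ to read at least one cell from each pair, again giving an $\Omega(n)$ lower bound. The remaining work — checking these few small instances carefully and fixing the RAM conventions — is routine.
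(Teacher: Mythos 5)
Your proof is correct and takes essentially the same route as the paper: the paper likewise restricts to $\sigma_1=\sigma_2$, reduces the question to an element-wise comparison of $B_1$ and $B_2$, and argues that any unread cell of $B_1$ could be flipped to change the answer, forcing $n$ probes. Your write-up is just a more explicit rendering of that adversary argument, supplying concrete witness matrices and the additional word-RAM bit-packing caveat (with the $\sigma_2$-perturbation variant) that the paper does not address.
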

\begin{proof}
    Consider the reduced problem where the input is two prefixes $S_1$ and $S_2$ such that: 
    $$\sigma_1=\sigma_2$$ 
    Since in our boolean encoding, $\forall > \exists$, the problem reduces to checking that $B_1$ is element-wise larger than $B_2$. This requires at least $n$ accesses from $B_1$. Suppose some element was not accessed, then the output of the algorithm cannot depend on this element. However, if that element were $\forall$ in $B_2$, its value in $B_1$ would give different answers. 
\end{proof}

The algorithm allows one to compute the probability of implication between two uniformly sampled sentences in linear memory. Equivalently, this is the total number of true or false implications given $n$ variables. This is a lower bound on the actual number of implications for a specific matrix. However, the time complexity of this naive approach is poor and only permits feasible computation for very small $n$. Does there exist a polynomial time and memory algorithm to compute this quantity? What about to efficiently generate all sufficient or necessary statements given some statement? 

Many symbolic logic solvers currently exist in the market, some with proprietary algorithms. Due to the restrictive nature of this algorithm, it is unlikely to find application beyond as a leetcode question. 

\bibliographystyle{plainnat} 
\bibliography{main} 
\end{document}